\title{Let's Have Both! \\Optimal List-Recoverability via Alphabet Permutation Codes}
\author[1]{Sergey Komech}
\author[1]{Jonathan Mosheiff}
\affil[1]{Department of Computer Science, Ben-Gurion University}
\date{\small{\it{To the memory of Professor Boris Markovich Gurevich, with infinite gratitude and respect}}}
\author{Anoynmous Submission}
\date{}
\begin{document}
	
	\maketitle
	
	\ifauthors	\blfootnote{JM is supported by Israel Science Foundation grant 3450/24 and an Alon Fellowship. SK is supported by European Research Council Grant No. 949707.}
	\fi
	
	\thispagestyle{empty}	
	\begin{abstract}

        We introduce \emph{alphabet-permutation (AP) codes}, a new family of error-correcting codes defined by iteratively applying random coordinate-wise permutations to a fixed initial word. A special case recovers random additive codes and random binary linear codes, where each permutation corresponds to an additive shift over a finite field.
        
        We show that when these permutations are drawn from a suitably ``mixing'' distribution, the resulting code is almost surely list-recoverable with list size proportional to the inverse of the gap to capacity. Compared to any linear code, our construction achieves exponentially smaller list sizes at the same rate. Previously, only fully random codes were known to attain such parameters, requiring exponentially many random bits and offering no structure. In contrast, AP codes are structured and require only polynomially many random bits---providing the first such construction to match the list-recovery guarantees of random codes.
	\end{abstract}


	
    \clearpage
	\setcounter{page}{1}

\section{Introduction}

In many modern coding scenarios—from communication over noisy channels to derandomization—each received symbol may be ambiguous, lying in a small list of possible values. This motivates the notion of \deffont{list-recoverability}, a fundamental generalization of list-decoding that has become central to modern code constructions and decoding algorithms~\cite{GS1998,GI2001,GR2008,GW2013,Kopparty2015,HW2018,HRW2020}. List-recoverable codes also play a key role in diverse applications across theoretical computer science, including the construction of pseudorandom objects such as extractors and condensers~\cite{GUV2009,TZ2004}, algorithmic applications~\cite{INR2010,NPR2011,GNP+2013,LNN+2016,DW2022}, and cryptographic constructions~\cite{MNP+2024}.

Concretely, let $\Sigma$ be a finite \deffont{alphabet} of size at least two, and let $\cC \subseteq \Sigma^n$ be a code of blocklength $n \in \N$. We say that $\cC$ is \deffont{$\LR{\rho}{\ell}{L}$} if it contains no \deffont{$\Cl{\rho}{\ell}$} subset of size $L+1$. Here, a set $D \subseteq \Sigma^n$ is \deffont{$\Cl{\rho}{\ell}$} if there exist sets $Z_1, \dots, Z_n \subseteq \Sigma$, each of size at most $\ell$, such that every $x \in D$ satisfies
\[
\bigl|\{\,i \in [n] : x_i \notin Z_i \}\bigr| \;\le\; \rho n \ecomma
\]
that is, $x$ disagrees with the list tuple $(Z_1,\dots,Z_n)$ in at most $\rho n$ positions. This formulation captures a worst-case setting in which both the list sets and the error locations may be adversarial.

This definition subsumes several classical notions. For example, $(0,\ell,L)$-list-recoverability corresponds to \deffont{zero-error $(\ell,L)$-list-recoverability}; $(\rho,1,L)$-list-recoverability coincides with \deffont{$(\rho,L)$-list-decodability}; and $\LRe{\rho}{1}{1}$ captures \deffont{unique-decodability up to radius $\rho$}.

The \deffont{rate} of a code $\cC \subseteq \Sigma^n$ is 
$R := \frac{\log_q |\cC|}{n}$, where $q := |\Sigma|$. The information-theoretic limit for list-recoverability is given by
\[
h^*_{q,\ell}\inparen{\rho}
:= \begin{cases}
\rho\,\log_q\!\inparen{\frac{q-\ell}{\rho}} + (1-\rho)\,\log_q\!\inparen{\frac{\ell}{1-\rho}} &\text{if }\rho \le 1-\frac \ell q \\
1 &\text{if }\rho > 1-\frac \ell q \eperiod
\end{cases}
\]
Namely, for any $q \ge 2$, $\ell \in \N$, $0 \le \rho \le 1-\frac\ell q$, and $\eps > 0$, there exist $q$-ary codes of rate at least $1 - h^*_{q,\ell}(\rho) - \eps$ that are $\LR{\rho}{\ell}{O(\ell/\eps)}$. This bound is achieved with high probability by a \deffont{plain random code (PRC)}, i.e., a uniformly random subset of $\Sigma^n$ of size $|\Sigma|^{Rn}$. Conversely, any $q$-ary code with rate exceeding $1 - h^*_{q,\ell}(\rho) + \eps$ cannot be $\LR{\rho}{\ell}{q^{o(n)}}$ \cite[Theorem 2.4.12]{Resch2020}.

A substantial line of work~\cite{GR2008,GW2013,RW2018,LP2020,GLS+2021,GST2023,Tam24} has aimed to construct explicit codes that approach this bound. A central goal is to construct codes of rate
\[
R = 1 - h_q^*\inparen{\rho,\ell} - \eps
\]
that are $\LR{\rho}{\ell}{L}$ with $L = O(\ell/\eps)$, matching the performance of PRCs. We refer to this target as the \deffont{Elias Bound for List Recovery}, in analogy with Elias’s classical list-decoding bound~\cite{Elias1957}. See also~\cite{MRSY24} for a recent discussion.

We distinguish between the \deffont{large alphabet} and \deffont{small alphabet} regimes depending on whether $q \ge \exp{\Omega(1/\eps)}$ or $q \le \exp{o(1/\eps)}$, respectively. In this work, we focus on the large alphabet regime. While PRCs meet the Elias Bound in this setting, the best known explicit constructions—such as Folded Reed–Solomon codes~\cite{Tam24}—fall exponentially short when $\ell \ge 2$.

Furthermore, even structured random codes face strong limitations in reaching the Elias Bound. It was shown in~\cite{CZ24} that any Folded or Plain Reed--Solomon code must satisfy 
$L \ge \ell^{\Omega(1/\eps)}$, yielding an exponential gap from the target. This lower bound was subsequently extended to \deffont{Random Linear Codes (RLCs)} by~\cite{LMS2024}, who also conjectured that it applies to all linear codes. The conjecture was later confirmed in~\cite{LS2025}, whose proof appears to extend even to the broader class of additive (i.e., closed under addition) codes over $\F_q$.

\paragraph{Our Contribution.} In this work, we construct the first family of codes—beyond plain random codes—that achieve the Elias Bound for list-recovery. Prior to our work, every known code family other than PRCs fell exponentially short of this benchmark when $\ell \ge 2$. Crucially, our construction requires only a \emph{polynomial} number of random bits, in contrast to PRCs, which require exponentially many. Moreover, our codes exhibit nontrivial combinatorial structure. In light of the aforementioned barriers, they are neither linear nor even additive. In this sense, we show that it is indeed possible to have both: the list-recovery performance of plain random codes, and the structure of a significantly more economical construction.

\begin{theorem}[Main Theorem]\label{thm:introMainLR}
Let $q,\ell,L,n \in \N$ with $q \ge 2$ and $\ell < \min\{q,L\}$. Suppose $0 \le \rho \le 1 - \frac{\ell}{q}$, and let 
$
\eta \ge \frac{2\,\log_q\!\bigl(2n/\ln2\bigr)}{n}
$ and $\delta > 0$. Define
\[
R :=
1
- h^*_{q,\ell}(\rho)
- \frac{\log_q \binom q\ell +\frac{1}{n}}{L+1}
- \eta
\ecomma
\]
and assume $k := R \cdot \log_2 q \cdot n$ is an integer.

Then there exists a random code ensemble $\cC \subseteq \Sigma^n$ (with $|\Sigma| = q$) of rate $R$, using only $O(nk(\ell\log q + \log \tfrac 1\delta))$ random bits, such that
\[
\PR{\cC \text{ is } \LR{\rho}{\ell}{L}}
\;\ge\;
1 -
\bigl(\sqrt 2\cdot k \cdot q^{-\tfrac{\eta n}{2}}\bigr)
-
\bigl(c\cdot \delta \cdot n\cdot k\cdot q^{4\ell}\bigr)
\ecomma
\]
for some universal constant $c > 0$.
\end{theorem}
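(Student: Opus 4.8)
\emph{Reduction and the code.} First I would pass to the average-radius relaxation: $C\subseteq\Sigma^n$ can fail to be $\LR{\rho}{\ell}{L}$ only if there are $L+1$ distinct codewords $c^{(0)},\dots,c^{(L)}$ and sets $Z_1,\dots,Z_n\subseteq\Sigma$ of size $\le\ell$ with $\sum_{j=0}^{L}\bigl|\{i:c^{(j)}_i\notin Z_i\}\bigr|\le(L+1)\rho n$; optimizing over the $Z_i$ this says $\sum_{i\in[n]}\plur^{(\ell)}_i\ge(L+1)(1-\rho)n$, where $\plur^{(\ell)}_i$ is the sum of the $\ell$ largest symbol-multiplicities among $c^{(0)}_i,\dots,c^{(L)}_i$; call such a tuple \emph{$\ell$-clustered} (the hypothesis $\ell<L$ makes this non-trivial). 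So it is enough to bound the probability that $\cC$ contains an $\ell$-clustered $(L{+}1)$-tuple. For $\cC$ I would use an alphabet-permutation code: fix a base word $c_0\in\Sigma^n$; for each $i\in[n]$ and $j\in[k]$ independently sample a permutation $\pi_{i,j}$ of $\Sigma$ from a $\delta$-almost $O(\ell)$-wise-independent family (seed length $O(\ell\log q+\log\tfrac1\delta)$ each, hence $O(nk(\ell\log q+\log\tfrac1\delta))$ random bits total); and encode $m\in\{0,1\}^k$ as $\enc(m)_i:=\bigl(\pi_{i,1}^{m_1}\circ\cdots\circ\pi_{i,k}^{m_k}\bigr)(c_{0,i})$. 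A short systematic prefix forces $\enc$ to be injective, so $\cC$ has rate exactly $R$.

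\emph{The doubling filtration, after Li--Wootters.} Reveal the rows $\Pi_j=(\pi_{i,j})_{i\in[n]}$ one at a time, in the order $j=k,k-1,\dots,1$; writing $\Pi(x)$ for coordinate-wise application, the subcode $\cC_t\subseteq\cC$ of messages supported on the last $t$ bits is determined after $t$ reveals, $\cC_0=\{c_0\}$, $\cC_k=\cC$, and there is the \emph{doubling} identity $\cC_t=\cC_{t-1}\cup\Pi_{(t)}(\cC_{t-1})$, where $\Pi_{(t)}$ is a fresh row independent of $\cC_{t-1}$. This is the analogue of appending one generator to a random linear code, except that the new half is a random coordinate-wise \emph{permutation} of $\cC_{t-1}$ instead of a translate — and it is exactly this extra scrambling that lets the construction escape the $\ell^{\Omega(1/\eps)}$ list-size lower bounds known for linear (and conjectured for additive) codes. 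Since being ``$\ell$-clustered'' is monotone under $\cC_{t-1}\subseteq\cC_t$, a union bound over the $k$ reveals gives $\PR{\cC\text{ not }\LR{\rho}{\ell}{L}}\le\sum_{t=1}^{k}\PR{\cC_t\text{ bad}\mid\cC_{t-1}\text{ good}}$, where ``good'' will be a strengthening of list-recoverability: a profile of thresholds $\tau_s$ for $s=\ell+1,\dots,L+1$ (smaller subsets are automatically $\ell$-clustered) such that no $s$-subset $S\subseteq\cC_{t-1}$ has $\sum_i\plur^{(\ell)}_i(S)\ge\tau_s$, with $\tau_{L+1}=(L+1)(1-\rho)n$ recovering the property we want.

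\emph{The single-row bound.} This is the heart. Fix a good $\cC_{t-1}$ and a fresh row $\Pi=\Pi_{(t)}$. An $\ell$-clustered $(L{+}1)$-subset of $\cC_t$ must have the form $D_{\mathrm{old}}\sqcup\Pi(E)$ with $D_{\mathrm{old}},E\subseteq\cC_{t-1}$ and $s:=|E|\ge1$, so after a union bound over the at most $(L+1)\,|\cC_{t-1}|^{L+1}\le(L+1)\,q^{Rn(L+1)}$ choices of $(D_{\mathrm{old}},E)$ it remains to bound, for fixed $(D_{\mathrm{old}},E)$, the $\Pi$-probability that $D_{\mathrm{old}}\cup\Pi(E)$ is $\ell$-clustered. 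Distinct coordinates use independent permutations, so in the idealized (truly random) model this factorizes over $i$; at each coordinate $\pi_i$ sends the fixed multiset $\{e_i:e\in E\}$ to a near-uniform configuration, a negative-association argument controls the per-coordinate contribution by its single-letter value, and summing over the $\le\binom{q}{\ell}$ covering sets $Z_i$ produces the $\log_q\binom{q}{\ell}$ loss. The rate has been budgeted precisely so that this, run against the $q^{Rn(L+1)}$ tuple count, yields an exponent $\le-(L+1)\eta n\le-\eta n$ — for the full split $s=L+1$ this is just the plain-random-code computation, and the problematic small-$s$ splits (where few fresh codewords must destroy a cluster already present in $D_{\mathrm{old}}$) are exactly why the ``good'' invariant must be stronger than list-recoverability, ruling out clustered $D_{\mathrm{old}}$'s so that few $Z_i$-profiles remain compatible. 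A Cauchy--Schwarz step (splitting $[n]$ and absorbing lower-order $\poly(n)$ factors via $\eta n\ge2\log_q(2n/\ln2)$) costs a square root, giving $\PR{\cC_t\text{ bad}\mid\cC_{t-1}\text{ good}}\le\sqrt2\,q^{-\eta n/2}$, hence the first error term $\sqrt2\,k\,q^{-\eta n/2}$ after summing over the $k$ steps. Finally, every estimate above touches each $\pi_{i,j}$ only through $O(\ell)$-wise statistics (here the constraint $|Z_i|\le\ell$ keeps everything ``$\ell$-local''), so replacing truly random permutations with $\delta$-almost $O(\ell)$-wise-independent ones perturbs each of the $O(nk)$ relevant estimates by at most $\delta\cdot q^{O(\ell)}$, contributing the second error term $c\,\delta\,n\,k\,q^{4\ell}$. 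Adding the two yields the claimed bound; with $L+1\approx2\ell/\eps$ and $\eta=\eps/2$ this gives rate $1-h^*_{q,\ell}(\rho)-\eps$ and $L=O(\ell/\eps)$, matching plain random codes.

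\emph{Main difficulty.} The real work is the single-row bound together with the choice of the threshold profile $\{\tau_s\}$, which must simultaneously: (i) specialize at $s=L+1$ to the target list-recoverability; (ii) \emph{propagate} — goodness of $\cC_{t-1}$ plus one fresh coordinate-wise permutation should imply goodness of $\cC_t$ except with probability $q^{-\Omega(\eta n)}$, for all split sizes $s$ at once, including the delicate small-$s$ regime; and (iii) remain cheap enough that the $(D_{\mathrm{old}},E)$ union bound still beats $q^{Rn(L+1)}$. Getting (ii) is precisely where one must convert ``random permutation'' into a quantitative gain over ``random translation'', and where the binary-linear Li--Wootters argument has to be genuinely rebuilt over a $q$-ary alphabet with permutation (rather than vector-space) structure. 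Arranging that $O(\ell)$-wise — rather than $(L{+}1)$-wise — independence of the per-coordinate permutations is enough, which is what keeps the seed length at $O(nk(\ell\log q+\log\tfrac1\delta))$, is the other main subtlety.
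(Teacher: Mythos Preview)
Your sketch diverges from the paper at the crucial point and leaves the central step unresolved. The paper does \emph{not} union-bound over $(L{+}1)$-tuples of codewords and does not use any threshold profile $\{\tau_s\}$. Instead it tracks a single scalar potential
\[
K_{\cC} \;=\; \frac{1}{|P|}\sum_{\bZ\in P} q^{\alpha n \cdot L_{\cC}(\bZ)}, \qquad \alpha = \frac{\log_q\binom{q}{\ell}+1/n}{L+1},
\]
where $P$ is the set of input tuples $\bZ$ and $L_{\cC}(\bZ)$ is the output list size. One checks that $K_{\cC}<2$ already forces $(\rho,\ell,L)$-list-recoverability (a single bad $\bZ$ alone would make $K_{\cC}\ge q$), so the entire task is to keep this one number below $2$ along the doubling filtration. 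The key identity --- and the only place $\ell$-wise independence enters --- is $\Eover{\Pi_i}{K_{\cC_i}\mid \Pi^{i-1}} = K_{\cC_{i-1}}^{2}$, because $A_{\cC_i}(\bZ)=A_{\cC_{i-1}}(\bZ)\cdot A_{\cC_{i-1}}(\Pi_i^{-1}\bZ)$ and $\ell$-wise independence makes $\Pi_i^{-1}\bZ$ uniform on $P$. Combined with the deterministic lower bound $K_{\cC_i}\ge 1+2\beta$ whenever $K_{\cC_{i-1}}=1+\beta$, Markov's inequality on the nonnegative variable $K_{\cC_i}-(1+2\beta)$ gives the per-step failure probability $\lambda_{i-1}^{1/2}$; the square root comes from this Markov step and the recursion $\lambda_i=2\lambda_{i-1}+\lambda_{i-1}^{1.5}$, not from Cauchy--Schwarz. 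No case split on $s=|E|$, no negative-association argument, and no invariant beyond ``$K_{\cC_{i-1}}\le 1+\lambda_{i-1}$'' is needed.

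By contrast, your approach pushes all the work into designing the profile $\{\tau_s\}$ and into the ``delicate small-$s$ regime'', which you explicitly leave open. I do not see how your union bound closes: at step $t=k$ you are summing over roughly $q^{Rn(L+1)}$ tuples, and when $s=|E|$ is small the fresh randomness in $\Pi(E)$ touches only $s\ll L+1$ codewords, so the per-tuple probability cannot be as small as $q^{-Rn(L+1)}$ without a strong structural hypothesis on $D_{\mathrm{old}}$ --- precisely the unspecified $\tau_s$. The potential function sidesteps this entirely. Two smaller points: the passage to average-radius is not used in the paper (the potential tracks the ordinary list size $L_{\cC}(\bZ)$), and the $\delta$-almost $\ell$-wise independence is handled by a single global total-variation coupling via the Alon--Lovett theorem, which produces a truly $\ell$-wise independent $D'$ with $\norm{\uniform(T)-D'}\le c\,\delta\,q^{4\ell}$ and hence $\norm{\Pi-\Pi'}\le nk\cdot c\,\delta\,q^{4\ell}$; the technical theorem for exact $\ell$-wise independence is then used as a black box rather than re-opened estimate by estimate.
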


By setting $\eta = \frac{2\log_q(nk)}{n}$ and $\delta = \frac{1}{c\,n^2\,k\,q^{4\ell}}$, we obtain a code ensemble that requires only $O(n^2 \log q \cdot (\ell \log q + \log n))$ random bits, and with high probability yields $\LR{\rho}{\ell}{L}$ codes of rate
\[
1 - h^*_{q,\ell}(\rho) - \frac{\ell}{L+1}\inparen{1 + o(1)}\eperiod
\]

\section{Alphabet-Permutation Codes and the Road to List-Recovery}

In this section, we introduce \emph{alphabet-permutation (AP) codes}, a new family of error-correcting codes defined by iteratively applying coordinate-wise permutations. AP codes strictly generalize random additive codes and binary random linear codes, while supporting a structured encoding that enables sharp list-recovery guarantees.

Beyond defining our construction, this section presents the conceptual and technical foundations for our main results. In particular, we identify a key mixing property of permutation ensembles that suffices for list-recovery and outline a potential-function argument establishing this connection. This provides a roadmap for the remainder of the paper.




\subsection{Motivation and Overview}

Our goal is to construct codes that attain the Elias Bound for list-recovery, matching the parameters of plain random codes, but using significantly less randomness and with internal structure. 

A natural starting point is the family of random additive codes, which achieve the Elias Bound for \emph{list-decoding} in the binary setting~\cite{GHK2011,LW2021}. (In this case, they coincide with random linear codes.) These codes are highly structured and require relatively few random bits to sample. However, over larger alphabets, additive shifts fail to sufficiently disperse codewords, and the performance of additive codes degrades sharply in the list-recovery setting. This motivates the search for new code ensembles that go beyond additivity while preserving structure and enabling stronger pseudorandom behavior.

We propose \emph{alphabet-permutation (AP) codes} as a natural generalization. An AP code is defined by a sequence of coordinate-wise permutations, where at the $i$-th encoding step, we either apply or skip a set of permutations depending on the $i$-th input bit. The full code is the set of all outputs produced by such selective compositions starting from a fixed initial word. When the permutations are drawn independently from a suitable distribution, the resulting code exhibits strong pseudorandom behavior.

The central technical insight is that if the permutations are drawn from a distribution that “mixes” certain families of subsets (formally, is \emph{$\cB$-mixing}), then the resulting code almost surely intersects each such set in only a small number of codewords. This property turns out to be sufficient to guarantee list-recovery with parameters matching those of random codes.

We formalize this principle in \cref{prop:mixing} and sketch its proof using a potential-function argument inspired by~\cite{GHS+2002,LW2021}. We then show how to instantiate this framework using permutations drawn from suitably independent or approximately independent distributions, yielding codes that are both structured and optimally list-recoverable.

\subsection{Definition and Encoding}\label{sec:Defs}

\begin{definition}[Alphabet-Permutation Code]
Fix integers $q,n,k \in \N$ with $k \le n \log_2 q$, and let $\Sigma := \{0, \dots, q - 1\}$. Let $\cS_\Sigma$ denote the set of all permutations on $\Sigma$, and fix a matrix $\Pi \in \cS_\Sigma^{k \times n}$. The \deffont{encoding function} associated with $\Pi$ is
\[
\enc_\Pi \colon \F_2^k \to \Sigma^n
\]
defined as follows: given $z \in \F_2^k$, define a sequence $y^0, \dots, y^k \in \Sigma^n$ by
\[
y^0 := (0, \dots, 0) \qquad\text{and}\qquad
y^i := 
\begin{cases}
y^{i-1} &\text{if } z_i = 0 \\
\bigl(\Pi_{i,1}(y^{i-1}_1), \dots, \Pi_{i,n}(y^{i-1}_n)\bigr) &\text{if } z_i = 1
\end{cases}
\]
for $i = 1, \dots, k$. Then $\enc_\Pi(z) := y^k$. See \cref{fig:encoding} for an illustration.

The image of $\enc_\Pi$, denoted $\cC_\Pi \subseteq \Sigma^n$, is the \deffont{alphabet-permutation code} associated with $\Pi$.
\end{definition}

\begin{remark}[Multiset semantics]\label{rem:multiset}
The map $\enc_\Pi$ need not be injective. Accordingly, we interpret $\cC_\Pi$ as a multiset in general, with multiplicities given by preimage size. Given a set $B \subseteq \Sigma^n$, we interpret $|B \cap \cC_\Pi|$ as $\inabs{\{z \in \F_2^k \mid \enc_\Pi(z) \in B\}}$.
\end{remark}

\paragraph{Iterative Generation.}
The encoding process can also be viewed as building up $\cC_\Pi$ in stages: starting from the all-zero vector, we repeatedly apply coordinate-wise permutations based on each row of $\Pi$. Formally, define a sequence of multisets $\cC_0, \dots, \cC_k \subseteq \Sigma^n$ by:
\[
\cC_0 := \{(0, \dots, 0)\} \quad\text{and}\quad
\cC_i := \cC_{i-1} \cup \bigl\{ (\Pi_{i,1}(x_1), \dots, \Pi_{i,n}(x_n)) \mid x \in \cC_{i-1} \bigr\}
\]
for $i = 1, \dots, k$. Then $\cC_k = \cC_\Pi$. We refer to the sequence $(\cC_i)_{i=0}^k$ as the \deffont{generating sequence} associated with $\Pi$.

\begin{definition}[Random AP Codes]
Let $D$ be a distribution over $\cS_\Sigma$. A code $\cC_\Pi$ is said to be a \deffont{$D$-random AP code} if $\Pi \in \cS_\Sigma^{k \times n}$ is a matrix with independent entries sampled from $D$. The associated generating sequence is called a \deffont{$D$-random generating sequence}.
\end{definition}

\begin{figure}
    \centering
    \begin{tikzpicture}[
        font=\small,
        box/.style={draw, minimum width=0.6cm, minimum height=0.6cm, fill=gray!15},
        perm/.style={draw, minimum width=0.6cm, minimum height=0.6cm, fill=gray!50},
        labelrow/.style={anchor=east},
        arrow/.style={->, thick, >=Stealth},
        shorten <=2pt, shorten >=2pt
    ]
    
    \def\n{6}          
    \def\xsep{0.9}     
    \def\rowgap{1.2}   
    \def\rowgapmid{1.6} 
    
    \def\yA{0}
    \node[labelrow] at (-1.3, \yA) {$y^0$};
    \foreach \i in {1,...,\n} {
      \node[box] (y0\i) at ({(\i - 1)*\xsep}, \yA) {0};
    }
    
    \def\yB{\yA - \rowgap}
    \node[labelrow] at (-1.3, \yB) {$z_1 = 1$};
    \foreach \i in {1,...,\n} {
      \node[perm] (p1\i) at ({(\i - 1)*\xsep}, \yB) {$\Pi_{1,\i}$};
      \draw[arrow] (y0\i.south) -- (p1\i.north);
    }
    
    \def\yC{\yB - \rowgap}
    \node[labelrow] at (-1.3, \yC) {$y^1$};
    \foreach \i in {1,...,\n} {
      \node[box] (y1\i) at ({(\i - 1)*\xsep}, \yC) {};
      \draw[arrow] (p1\i.south) -- (y1\i.north);
    }
    
    \def\yD{\yC - \rowgapmid}
    \node[labelrow] at (-1.3, \yC - 0.5*\rowgapmid) {$z_2 = 0$};
    \node[labelrow] at (-1.3, \yD) {$y^2$};
    \foreach \i in {1,...,\n} {
      \node[box] (y2\i) at ({(\i - 1)*\xsep}, \yD) {};
      \draw[arrow] (y1\i.south) -- (y2\i.north);
    }
    
    \def\yE{\yD - \rowgap}
    \node[labelrow] at (-1.3, \yE) {$z_3 = 1$};
    \foreach \i in {1,...,\n} {
      \node[perm] (p3\i) at ({(\i - 1)*\xsep}, \yE) {$\Pi_{3,\i}$};
      \draw[arrow] (y2\i.south) -- (p3\i.north);
    }
    
    \def\yF{\yE - \rowgap}
    \node[labelrow] at (-1.3, \yF) {$y^3 = \enc_\Pi(z)$};
    \foreach \i in {1,...,\n} {
      \node[box] (y3\i) at ({(\i - 1)*\xsep}, \yF) {};
      \draw[arrow] (p3\i.south) -- (y3\i.north);
    }
    
    \end{tikzpicture}
    \caption{Computing $\enc_\Pi(z)$ for $n=6$, $k=3$ and $z = (1,0,1)$.}
    \label{fig:encoding}
\end{figure}

\subsection{Additive Codes as a Special Case}

Alphabet-permutation codes strictly generalize a well-known family of structured codes: additive codes over fields of characteristic two. Let $q = 2^m$ and let $G \in \F_q^{k \times n}$ be a matrix. Define
\[
\cC := \{ x G \mid x \in \F_2^k \} \subseteq \F_q^n \ecomma
\]
which is an \deffont{additive code} of $\F_2$-dimension $k$, meaning it forms a subspace over the subfield $\F_2 \subseteq \F_q$. Every such code arises as an AP code by letting each permutation act as an additive shift: for each $(i,j)$, define $\Pi_{i,j}(z) := z + G_{i,j}$. Then $\cC_\Pi = \cC$, and the associated generating sequence agrees with the successive application of the rows of $G$ via coordinate-wise addition.

Moreover, if $G$ is sampled uniformly at random, the resulting code is a \deffont{random additive code (RAC)}. In this case, each $\Pi_{i,j}$ is an independent uniform additive shift, i.e., sampled uniformly from the set $\{ z \mapsto z     + a \mid a \in \F_q \}$. Thus, RACs correspond precisely to $D$-random AP codes where $D$ is the uniform distribution over additive shifts.

\paragraph{Generalizing Beyond Additivity.} This connection motivates the study of random AP codes as a natural extension of RACs and binary random linear codes. Unlike RACs, AP codes allow for arbitrary coordinate-wise permutations, thereby enabling greater structural flexibility. Importantly, AP codes retain the iterative encoding process reminiscent of additive codes, while stepping outside the confines of linearity or additivity.

\subsection{From Additive Codes to List-Recoverable AP Codes}

In the case $q=2$, an RAC over $\F_q$ is merely a \deffont{binary RLC}. Generating sequences for such codes (under a different guise) were analyzed in several works \cite{GHS+2002,LW2021,GMM2022} that studied their list-decodability. Notably, \cite{LW2021} proves that binary RLCs are almost surely list-decodable with excellent parameters, and, in particular, achieve the Elias Bound for list-decodability. 

Recall that a random additive code (RAC) can be viewed as a $D$-random AP code, where $D$ is the uniform distribution over additive shifts $x \mapsto x + a$. The list-decoding results of~\cite{GHS+2002,LW2021,GMM2022} for binary RLCs can be attributed to the strong mixing behavior inherent in these additive permutations.

To analyze the list-recoverability of general AP codes, we abstract this idea by introducing the concept of a \deffont{$\cB$-mixing} distribution over permutations. This allows us to characterize ensembles of AP codes that spread codewords sufficiently uniformly with respect to adversarially chosen lists.

In the next section, we define $\cB_{\rho,\ell}$, the family of "bad sets" corresponding to $(\rho, \ell)$-list-recovery violations, and establish conditions under which a random AP code avoids these sets with high probability. This leads to our main technical result showing that if the underlying permutation distribution is "mixing" in a suitable sense, then the resulting AP code is list-recoverable with parameters matching the Elias Bound.

\subsection{List-Recovery via Mixing Ensembles}

Let $\rho \in [0,1]$ and $\ell \in \N$. Define the family $\cB_{\rho,\ell}$ of \deffont{bad sets for list-recovery} as those subsets of $\Sigma^n$ of the form
\[
\bigl\{x \in \Sigma^n \mid \bigl|\{i \in [n] : x_i \notin Z_i\}\bigr| \le \rho n\bigr\}
\]
where $Z_1, \dots, Z_n \subseteq \Sigma$ are sets of size $\ell$. A code is $\LR{\rho}{\ell}{L}$ if and only if it intersects every set in $\cB_{\rho,\ell}$ in at most $L$ elements.

Given a distribution $D$ over $\cS_\Sigma$, define the \deffont{power ensemble} $D^n$ as the distribution over maps $\Sigma^n \to \Sigma^n$ of the form
\[
(x_1,\dots,x_n) \mapsto (\pi_1(x_1),\dots,\pi_n(x_n)) \ecomma
\]
where each $\pi_i$ is sampled independently from $D$.

\begin{example}\label{ex:Dplus}
Let $q = 2^m$ and $\Sigma = \F_q$. Let $D_+$ be the uniform distribution over the additive shifts $\{z \mapsto z + a \mid a \in \F_q\}$. Then the power ensemble $D_+^n$ consists of maps of the form $x \mapsto x + u$, where $u \in \F_q^n$ is uniformly random.
\end{example}

To reason about when a random AP-code yields good list-recovery guarantees, we introduce the following notion of a distribution that 'mixes' bad sets.

\begin{definition}[$\cB$-mixing distribution]
Say that a family $\cB$ of subsets of $\Sigma^n$ is \deffont{regular} if it is closed under some transitive action on $\Sigma^n$. (For example, the family $\cB_{\rho,\ell}$ is regular because it is closed under translations).

A distribution $\nu$ over bijections $\Sigma^n \to \Sigma^n$ is \deffont{$\cB$-mixing} if it satisfies:
\begin{enumerate}
    \item Every $f$ in the support of $\nu$ maps $\cB$ to itself; that is, for all $B \in \cB$, we have $f(B) \in \cB$.
    \item For each fixed $B \in \cB$, the image $f(B)$ under $f \sim \nu$ is distributed uniformly over $\cB$.
\end{enumerate}
\end{definition}

If the maps in a generating sequence are sampled from a $\cB$-mixing distribution, then the resulting AP code is likely to intersect each set in $\cB$ in only a small number of codewords. This principle forms the technical core of our work and underlies our main results. The following proposition formalizes this idea. We sketch its proof below—via a potential-function argument that generalizes techniques from~\cite{GHS+2002,LW2021}—and present the full proof in \cref{sec:ProofMain}.

\begin{restatable}[$\cB$-mixing implies small intersections]{proposition}{mixing}
      \label{prop:mixing}
    Fix $q, L, n \in \N$, and let $\Sigma := \{0,\dots, q-1\}$. Let $\cB$ be a regular family of subsets of $\Sigma^n$, each of cardinality at most $q^{\beta n}$ ($0 \le \beta \le 1$).
    Fix a distribution $D$ over $\cS_\Sigma$ such that $D^n$ is $\cB$-mixing. Let $k \in \N$ such that the rate $R := \frac{k}{n \log_2 q}$ satisfies
    \[
    R = 1 - \beta - \frac{\log_q |\cB| + 1}{n(L+1)} - \eta
    \]
    for some \begin{equation}\label{eq:etaLowerBound}
        \eta \ge \frac{2\log_q(2\cdot n/\ln2)}n\eperiod
    \end{equation} Let $\Pi \in \cS_\Sigma^{k \times n}$ be a matrix with independent entries sampled from $D$.
    Then,
    $$\PR{\exists B\in \cB \text{ such that }\inabs{\cC_\Pi \cap B} > L} \le \sqrt 2\cdot k\cdot q^{-\frac{\eta n} 2}\eperiod$$
\end{restatable}

\begin{proof}[Proof sketch for {\cref{prop:mixing}}]
Let $\cC_0,\dots,\cC_k$ be a $D$-generating sequence in $\Sigma^n$. Write $q = |\Sigma|$ and $\cB = \cB_{\rho,\ell}$. For each $0 \le i \le k$, define
\[
K_i := \Eover{B \sim \uniform(\cB)}{q^{\inabs{\cC_i \cap B} \cdot \alpha\cdot n}}
\]
for some fixed $\alpha > 0$. To prove that $\cC_k$ is $\LR{\rho}{\ell}{L}$, it suffices to show the following:
\begin{enumerate}
    \item $K_0$ is small (since $\cC_0$ is a singleton).
    \item With high probability, the sequence $K_0, \dots, K_k$ grows slowly.
    \item If $K_k$ is small, then $\cC_k$ avoids large intersections with any $B \in \cB$; in particular, $\cC_k$ is $\LR{\rho}{\ell}{L}$.
\end{enumerate}

The first and third items are straightforward. The third, in particular, follows since if there exists $B' \in \cB$ with $|\cC_k \cap B'| > L$, then
\[
K_k = \Eover{B \sim \uniform(\cB)}{q^{|\cC_k \cap B| \cdot \alpha\cdot n}} \ge \frac{q^{|\cC_k\cap B'|\cdot \alpha\cdot n}}{|\cB|}\ge \frac{q^{(L+1)\cdot \alpha\cdot n}}{|\cB|}.
\]

The main challenge lies in bounding the growth of $K_i$. We show that
\[
\Eover{\cC_i}{K_i \mid K_{i-1}} \le K_{i-1}^2.
\]
This recurrence implies that $K_i$ grows roughly quadratically in expectation. A first-moment bound and union bound over $i$ show that $K_k$ remains small with high probability.

To prove the recurrence:
\begin{align*}
\Eover{\cC_i}{K_i\mid \cC_{i-1}} 
&= \Eover{f\sim D^n}{\Eover{B\in \cB}{q^{\inabs{(\cC_{i-1} \cup f(\cC_{i-1}))\cap B}\cdot \alpha}}} \\
&\le \Eover{f\sim D^n}{\Eover{B\in \cB}{q^{\inabs{\cC_{i-1} \cap B}\cdot \alpha + \inabs{f(\cC_{i-1})\cap B}\cdot \alpha}}} \\
&= \Eover{B\in \cB}{\Eover{f\sim D^n}{q^{\inabs{\cC_{i-1} \cap B}\cdot \alpha + \inabs{f(\cC_{i-1})\cap B}\cdot \alpha}}}
\\
&= \Eover{B\in \cB}{q^{\inabs{\cC_{i-1} \cap B}\cdot \alpha}\cdot \Eover{f\sim D^n}{q^{ \inabs{f(\cC_{i-1})\cap B}\cdot \alpha}}}
\\
&=\Eover{B\in \cB}{q^{\inabs{\cC_{i-1} \cap B}\cdot \alpha}\cdot \Eover{f\sim D^n}{q^{ \inabs{f(\cC_{i-1})\cap f(B')}\cdot \alpha}}}
&&\text{(set $B' := f^{-1}(B)$)} \\
&=\Eover{B\in \cB}{q^{\inabs{\cC_{i-1} \cap B}\cdot \alpha}\cdot \Eover{f\sim D^n}{q^{ \inabs{\cC_{i-1}\cap B'}\cdot \alpha}}}
&&\text{($f$ bijective)} \\
&=\Eover{B\in \cB}{q^{\inabs{\cC_{i-1} \cap B}\cdot \alpha}\cdot \Eover{B'\in \cB}{q^{ \inabs{\cC_{i-1}\cap B'}\cdot \alpha}}}
&&\text{($f^{-1}(B)$ uniform in $\cB$)} \\
&= \Eover{B\in \cB}{q^{\inabs{\cC_{i-1} \cap B}\cdot \alpha}}\cdot \Eover{B'\in \cB}{q^{\inabs{\cC_{i-1} \cap B'}\cdot \alpha}}
&&\text{(independence)} \\
&= K_{i-1} \cdot K_{i-1} = K_{i-1}^2\eperiod
\end{align*}
\end{proof}

Taking $\cB = \cB_{\rho,\ell}$, \cref{prop:mixing} yields immediate implications for list-recovery.

\begin{corollary}[$\cB_{\rho,\ell}$-mixing implies list-recoverability]
\label{cor:mixingToLR}
    Fix $\ell, q,L, n\in \N$ and $\rho \ge 0$ such that $1\le \ell < q$ and $\rho < 1-\frac \ell q$. Let $\Sigma = \{0,\dots, q-1\}$ and fix a distribution $D$ over $\cS_\Sigma$ such that $D^n$ is $\cB_{\rho,\ell}$-mixing. Let $k\in \N$ such that $$R:= \frac{k}{n\cdot \log_2 q} = 1 - h^*_{q,\ell}\!\inparen{\rho} - \frac {\log_q\binom q \ell+\frac 1n} {L+1} - \eta$$ for some $$
        \eta \ge \frac{2\log_q(2\cdot n/\ln2)}n\eperiod
    $$ Let $\Pi \in S_\Sigma^{k\times n}$ be a random matrix whose entries are sampled independently at random from $D$.    
    Then,
    $$\PR{\cC_\Pi \textrm{ is }\LR \rho \ell L} \ge 1 - \sqrt 2\cdot k\cdot q^{-\frac{\eta n} 2}\eperiod$$
\end{corollary}
\begin{proof}[Proof of \cref{cor:mixingToLR} given \cref{prop:mixing}]
    Let $\cB = \cB_{\rho,\ell}$. By a standard estimation \cite[Prop.\ 2.4.11]{Resch2020}, every $B\in \cB$ is of size at most $q^{n\cdot h_{q,\ell}^*(\rho)}$. Moreover, since each $B \in \cB_{\rho,\ell}$ is determined by a tuple $(Z_1, \dots, Z_n)$ with $Z_i \in \binom{\Sigma}{\ell}$, we have $|\cB| \le \binom{q}{\ell}^n$.

     Recall that $\cC_\Pi$ is $\LR \rho \ell L$ if and only if $|\cC\cap B| \le L$ for every $B\in \cB$. The corollary now follows immediately from \cref{prop:mixing}.
\end{proof}

A more specialized corollary concerns list-decoding of RACs over fields of characteristic $2$.

\begin{corollary}\label{cor:linear}
Fix $t, L, n \in \N$. Set $q := 2^t$ and let $0 \le \rho \le 1 - \tfrac{1}{q}$. Let $k \in \N$ satisfy
$$
R := \frac{k}{n\,t} 
= 1 - h_q(\rho) - \frac{1 + \tfrac{1}{n}}{L+1} - \eta\eperiod
$$
for some 
$$
\eta \;\ge\; \frac{2\,\log_q\!\bigl(2\,n/\ln 2\bigr)}{n}\eperiod
$$
Let $\cC$ be a random $\F_2$-linear code in $\F_q^n$ of rate $R$. Then,
$$
\PR{\cC \text{ is } \LD{\rho}{L}}
\;\ge\;
1 - \sqrt{2}\,k \,q^{-\tfrac{\eta n}{2}}\eperiod
$$
\end{corollary}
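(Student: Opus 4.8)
The plan is to recognize a random $\F_2$-linear code as a special case of the AP-code construction and then invoke \cref{thm:MainTechnical} with $\ell = 1$; essentially all the content already lives in \cref{thm:MainTechnical}, so the corollary is just an identification of the two objects plus a parameter check.

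First I would fix, for each $a \in \F_q$, the translation $\pi_a \colon \F_q \to \F_q$, $x \mapsto x + a$, which is a permutation of $\F_q$, and let $D$ be the uniform distribution over $\inset{\pi_a : a \in \F_q}$. Then $D$ is $1$-wise independent, since for any fixed $x_1 \in \F_q$ the value $\pi_a(x_1) = x_1 + a$ is uniform on $\F_q$ when $a$ is. Next I would carry out the composition computation that turns $\enc_\Pi$ into a linear map. Let $\Pi \in S_{\F_q}^{k \times n}$ have i.i.d.\ entries from $D$ and write $\Pi_{i,j} = \pi_{a_{i,j}}$, so that the $a_{i,j} \in \F_q$ are i.i.d.\ uniform. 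Since $(\F_q, +)$ is abelian, for $x \in \F_2^k$ of weight $t$ the composition $\pi_j^x = \bigcirc_{i : x_i = 1} \pi_{a_{i,j}}$ equals the translation by $\sum_{i : x_i = 1} a_{i,j}$ (independently of the order), so $y_j = \pi_j^x(0) = \sum_{i : x_i = 1} a_{i,j} = (Gx)_j$, where $G \in \F_q^{n \times k}$ has $G_{j,i} = a_{i,j}$ and we regard $\F_2 \subseteq \F_q$. Since the $a_{i,j}$ are i.i.d.\ uniform, $G$ is uniform over $\F_q^{n \times k}$, so $\cC_\Pi = \inset{Gx : x \in \F_2^k}$ is exactly a random $\F_2$-linear code of rate $R = \tfrac{k}{n \log_2 q}$, the multiset conventions agreeing on both sides (cf.\ the footnote to the definition of random $\F_2$-linear codes).

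It then remains to match parameters. With $\ell = 1$ we have $h_{q,1} = h_q$ and $\binom q1 = q$, so $\log_q \binom q1 = 1$ and the rate in \cref{thm:MainTechnical} becomes $1 - h_q(\rho) - \frac{1 + \frac 1n}{L+1} - \eta$, which is the rate in \cref{cor:linear}; the hypothesis $1 \le \ell < q$ reduces to $q \ge 2$ (true since $q = 2^t$), the lower bound on $\eta$ is identical, and $\LRe{\rho}{1}{L}$ is precisely $(\rho,L)$-list-decodability. Plugging into \cref{thm:MainTechnical} gives $\PR{\cC_\Pi \text{ is } \LD{\rho}{L}} \ge 1 - \sqrt 2\, k\, q^{-\eta n/2}$, as claimed. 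There is no real obstacle internal to this corollary---the substance is all in \cref{thm:MainTechnical}---and the only points worth a second glance are that composing translations again yields a translation and that the resulting generator matrix comes out uniform (both immediate from the group structure of $(\F_q,+)$), together with the boundary case $\rho = 1 - \tfrac 1q$, where $h_q(\rho) = 1$ forces the claimed rate below zero, so no $k \in \N$ satisfies the hypothesis and the statement is vacuous there.
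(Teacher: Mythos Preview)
Your proposal is correct and follows exactly the route the paper takes: the paper derives the corollary in the paragraph preceding it by identifying a random $\F_2$-linear code with the AP code built from the translation family $\{\pi_a : x \mapsto x+a\}$, noting this family is $1$-wise independent, and invoking \cref{thm:MainTechnical} with $\ell=1$. Your write-up is in fact more detailed than the paper's (which simply says ``evidently'' for the identification $\cC_\Pi = \{Gx : x\in \F_2^k\}$), and your handling of the boundary case $\rho = 1-\tfrac1q$ is a nice touch, since \cref{thm:MainTechnical} as stated requires the strict inequality $\rho < 1-\tfrac\ell q$.
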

\begin{proof}[Proof of \cref{cor:linear} given \cref{prop:mixing}]
We instantiate \cref{cor:mixingToLR} by taking $\Sigma = \F_q$ and setting $\ell = 1$, noting that list-decodability corresponds to list-recoverability with $\ell = 1$. We let $D = D_+$, the uniform distribution over additive shifts $z \mapsto z + a$ for $a \in \F_q$.

As noted in $\cref{ex:Dplus}$, when $\Sigma = \F_q$ with $q = 2^m$, the ensemble $D_+^n$ consists of maps of the form $x \mapsto x + u$, where $u \in \F_q^n$ is chosen uniformly at random. It is easy to verify that this ensemble is $\cB_{\rho,1}$-mixing for all $\rho \in [0,1]$. 
\end{proof}

When $t=1$, $\cC$ is simply a random linear code in $\F_2^n$, thus recovering (in spirit) the main result of~\cite{LW2021}. We note that~\cite{LW2021} achieves a somewhat smaller list size by exploiting the linearity\footnote{In our framework, this corresponds to the fact that for linear (or additive) codes, the coordinate-wise permutations used in the construction commute, a property not shared by general AP codes.}
of $\cC$, a method not applicable in the more general setting of alphabet-permutation codes.

\subsection{Achieving Mixing via Independent Permutations}

While $D_+^n$ works perfectly for $\ell = 1$, the situation is more subtle when $\ell > 1$.
Unfortunately, $D_+^n$ is generally not $\cB_{\rho,\ell}$-mixing for $\ell \ge 2$. For example, let $q = 2^m$ with $m > 1$, and fix $\rho = 0$ and $\ell = 2$. Consider the combinatorial rectangles
\[
R := \{a,b\}^n \qquad\text{and}\qquad R' := \{a,c\} \times \{a,b\}^{n-1}
\]
for some $a,b,c \in \F_q$ such that $b - a \ne \pm(c - a)$. Both $R$ and $R'$ lie in $\cB_{0,2}$, yet there is no additive shift $x \mapsto x + u$ with $u \in \F_q^n$ that maps $R$ to $R'$. This shows that $D_+^n$ does not mix $\cB_{0,2}$ uniformly, and hence fails to be $\cB_{\rho,\ell}$-mixing in general when $\ell > 1$.

This motivates the search for distributions $D$ whose power ensemble $D^n$ is genuinely $\cB_{\rho,\ell}$-mixing even for $\ell > 1$, enabling list-recovery beyond what additive codes support. In the next section, we identify such ensembles by leveraging the classical notion of $\ell$-wise independence among permutations.

\begin{definition}[$m$-wise independence]\label{def:independence}
  Let $\Sigma$ be a set with $\inabs{\Sigma} = q$, and let $1 \le m \le q$. Denote by $\Sigma_m$ the set of all $m$-tuples of distinct elements in $\Sigma$. Let $D$ be a distribution over $\cS_\Sigma$. 
  We say that $D$ is \deffont{$m$-wise independent} if, for every $(x_1,\dots,x_m) \in \Sigma_m$, when $\pi$ is drawn from $D$, the tuple $(\pi(x_1),\dots,\pi(x_m))$ is uniformly distributed over $\Sigma_m$. 
\end{definition}

\begin{remark}
    $m$-wise independence implies $m'$-wise independence for all $1 \le m' \le m$.
\end{remark}

\begin{example}\label{ex:independence}
    The uniform distribution over $\cS_\Sigma$ is clearly $|\Sigma|$-wise independent. For $\Sigma = \F_q$ with $q = 2^m$ and $m > 1$, the additive shift distribution $D_+$ is $1$-wise independent but not $2$-wise independent.
\end{example}

We next show that $\ell$-wise independence suffices to ensure $\cB_{\rho,\ell}$-mixing. This connects our framework to a well-studied pseudorandomness notion and allows efficient instantiations of suitable ensembles.

\begin{lemma}\label{lem:IndpendenceYieldsMixing}
If $D$ is an $\ell$-wise independent distribution over $\cS_\Sigma$, then $D^n$ is $\cB_{\rho,\ell}$-mixing for all $\rho \in [0,1]$.
\end{lemma}

\begin{proof}
Let $B \in \cB_{\rho,\ell}$ be defined by sets $Z_1,\dots,Z_n \subseteq \Sigma$ with $\inabs{Z_i} = \ell$. Let $f = (\pi_1,\dots,\pi_n) \sim D^n$. Then,
\[
f(B) = \{x \in \Sigma^n : \inabs{\{i \in [n] : x_i \notin \pi_i(Z_i)\}} \le \rho n\} \eperiod
\]
Since each $\pi_i(Z_i)$ is uniformly distributed over $\binom{\Sigma}{\ell}$ and the sets are independent, the image $f(B)$ is uniformly distributed in $\cB_{\rho,\ell}$.
\end{proof}

\cref{cor:mixingToLR,lem:IndpendenceYieldsMixing} immediately imply that a $(D,k,n)$-random AP code achieves the list-recovery Elias bound with high probability, provided that $D$ is $\ell$-wise independent. 
\begin{theorem}\label{thm:LRFullIndependence}
    Fix $\ell, q,L, n\in \N$ and $\rho \ge 0$ such that $1\le \ell < q$ and $\rho < 1-\frac \ell q$. Let $\Sigma = \{0,\dots, q-1\}$ and fix an $\ell$-wise independent distribution $D$ over $\cS_\Sigma$. Let $k\in \N$ such that $$R:= \frac{k}{n\cdot \log_2 q} = 1 - h^*_{q,\ell}\!\inparen{\rho} - \frac {\log_q\binom q \ell+\frac 1n} {L+1} - \eta$$ for some 
    $$
        \eta \ge \frac{2\log_q(2\cdot n/\ln2)}n\eperiod
    $$ Let $\Pi \in S_\Sigma^{k\times n}$ be a random matrix whose entries are sampled independently at random from $D$.    
    Then,
    $$\PR{\cC_\Pi \textrm{ is }\LR \rho \ell L} \ge 1 - \sqrt 2\cdot k\cdot q^{-\frac{\eta n} 2}\eperiod$$
\end{theorem}

\subsection{Proof of \cref{thm:introMainLR}: Partial Derandomization via Near-Independence}

We now prove \cref{thm:introMainLR}, as stated in the introduction. Theorem~\ref{thm:LRFullIndependence} shows that $\ell$-wise independent permutations suffice to construct list-recoverable codes matching the Elias bound. However, sampling each entry of $\Pi \in \cS_\Sigma^{k \times n}$ from the uniform distribution over $\cS_\Sigma$ requires specifying a random permutation over an alphabet of size $q$, costing $O(q \log q)$ bits per entry. This yields a total randomness cost of $O(nk \cdot q \log q)$ bits.

Our goal is to reduce this to $O(nk(\ell \log q + \log \tfrac{1}{\delta}))$ bits by using distributions over $\cS_\Sigma$ that are only approximately $\ell$-wise independent. To this end, we rely on the standard notion of \emph{approximate} or \emph{near} $m$-wise independence—distributions that are close (in total variation distance) to truly $m$-wise independent ones.

\begin{definition}[Total variation distance]
Let $D$ and $D'$ be distributions over a finite set $\Omega$. The \deffont{total variation distance} between $D$ and $D'$ is
\[
\|D - D'\| := \frac{1}{2} \sum_{\omega \in \Omega} |D(\omega) - D'(\omega)|\eperiod
\]
We say that $D$ and $D'$ are \deffont{$\delta$-close} if $\|D - D'\| \le \delta$.
\end{definition}

\begin{definition}[$m$-wise $\delta$-independence of permutations]
Let $\Sigma$ be a finite set with $|\Sigma| = q$, and let $1 \le m \le q$. Let $D$ be a distribution over $\cS_\Sigma$, and fix $\delta \ge 0$. We say that $D$ is \deffont{$m$-wise $\delta$-independent} if, for every $(x_1, \dots, x_m) \in \Sigma_m$ (i.e., all entries distinct), the joint distribution of $(\pi(x_1), \dots, \pi(x_m))$ for $\pi \sim D$ is $\delta$-close (in total variation distance) to uniform over $\Sigma_m$.

A finite family $T \subseteq \cS_\Sigma$ is $m$-wise $\delta$-independent if the uniform distribution over $T$ is.
\end{definition}

\begin{remark}
$m$-wise $0$-independence coincides with plain $m$-wise independence.
\end{remark}

We now combine two tools: one giving small $m$-wise $\delta$-independent families, and another converting near-independence into true independence with bounded total variation error.

\begin{theorem}[Existence of small $\delta$-independent families {\cite[Thm.\ 5.9]{KNR2009}}]\label{thm:KNR}
Let $\Sigma$ be a finite set with $q := |\Sigma|$, and fix $1 \le m \le q$ and $\delta > 0$. There exists an $m$-wise $\delta$-independent family $T \subseteq \cS_\Sigma$ with
\[
\log_2 |T| \le O\left(m \log q + \log \tfrac{1}{\delta}\right).
\]
Moreover, each $\pi \in T$ can be evaluated in time $\polylog(|T|)$.
\end{theorem}

\begin{theorem}[Reduction to full independence {\cite{AL2013}}]\label{thm:AL}
Let $D$ be an $m$-wise $\delta$-independent distribution over $\cS_\Sigma$. Then there exists an $m$-wise independent distribution $D'$ over $\cS_\Sigma$ such that
\[
\|D - D'\| \le O(\delta \cdot q^{4m})\eperiod
\]
\end{theorem}

We now construct list-recoverable codes using a near-independent distribution over permutations. To analyze them, we couple this distribution to a truly 
$\ell$-wise independent one and apply \cref{thm:LRFullIndependence} to the latter, transferring the guarantee via a total variation bound.

\begin{proof}[Proof of \cref{thm:introMainLR}]
Let $T \subseteq \cS_\Sigma$ be an $\ell$-wise $\delta$-independent family from \cref{thm:KNR}. Let $\Pi \in \cS_\Sigma^{k \times n}$ be a matrix with independent entries sampled uniformly from $T$.

The total number of random bits required to sample $\Pi$ is at most
\[
n \cdot k \cdot \log_2 |T| = O\left(n^2 \log q \cdot \left(\ell \log q + \log \tfrac{1}{\delta}\right)\right).
\]

Let $D$ denote the uniform distribution over $T$, and let $D'$ be an $\ell$-wise independent distribution guaranteed by \cref{thm:AL}, satisfying
\[
\|D - D'\| \le O(\delta \cdot q^{4\ell})\eperiod
\]

Let $\Pi'$ be a matrix with i.i.d.\ entries drawn from $D'$, and note that by \cref{thm:LRFullIndependence},
\[
\Pr[\cC_{\Pi'} \text{ is } \LR{\rho}{\ell}{L}] \ge 1 - \sqrt{2} \cdot k \cdot q^{-\eta n/2}\eperiod
\]

Since each entry of $\Pi$ differs in total variation distance from the corresponding entry of $\Pi'$ by at most $O(\delta \cdot q^{4\ell})$, the total variation distance between the joint distributions of $\Pi$ and $\Pi'$ is $O(nk \cdot \delta q^{4\ell})$. Hence,
\begin{align*}
\Pr[\cC_\Pi \text{ is } \LR{\rho}{\ell}{L}]
\;&\ge\;
\Pr[\cC_{\Pi'} \text{ is } \LR{\rho}{\ell}{L}] - O(nk \cdot \delta \cdot q^{4\ell})
\\
&\ge 1 - \sqrt{2} \cdot k \cdot q^{-\eta n/2} - O(nk \cdot \delta \cdot q^{4\ell})\eperiod
\end{align*}
\end{proof}

\section{Mixing Implies $\cB$-Avoidance---Proof of \cref{prop:mixing}}\label{sec:ProofMain}
We shall now restate and prove \cref{prop:mixing}---the technical core of this work.

\mixing*
\begin{proof}

We define some notation, inspired by \cite{GHS+2002,LW2021}. 
For $B\in \cB$ and a code $\cC\subseteq \Sigma^n$, let 
$$
A_\cC(B) = q^{|B\cap \cC|\cdot \alpha\cdot n}
$$
where 
$$
\alpha = \frac {\log_q |\cB|+1} {(L+1)\cdot n}\ecomma
$$
and define the potential function
$$
K_\cC = \frac{1}{|\cB|}\sum_{B\in \cB}A_\cC(B)\eperiod
$$

\cref{prop:mixing} is now an immediate consequence of the two following lemmas.

\begin{lemma} \label{lem:SmallPotentialImpliesLR}
    Suppose that a code $\cC\subseteq \Sigma^n$ satisfies $K_\cC < 2$. Then $|\cC\cap B|\le L$ for all $B\in \cB$.
\end{lemma}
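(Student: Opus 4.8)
The plan is to prove the contrapositive: assume $\cC$ is \emph{not} $\LR \rho \ell L$, and show that this forces $K_\cC \ge 2$. By definition, if $\cC$ fails to be $\LR \rho \ell L$, there is a $\Cl \rho \ell$ subset $D \subseteq \cC$ with $|D| = L+1$. Unpacking the definition of $\Cl \rho \ell$, there exist sets $W_1,\dots,W_n \subseteq \Sigma$, each of size \emph{at most} $\ell$, such that every $x \in D$ differs from $(W_1,\dots,W_n)$ in at most $\rho n$ positions. The first small step is to pad each $W_i$ up to size exactly $\ell$ (possible since $\ell < q$), obtaining a tuple $\bZ = (Z_1,\dots,Z_n) \in P$ with $W_i \subseteq Z_i$; enlarging the sets only decreases the number of positions where $x$ disagrees, so we still have $\delta(x,\bZ) \le \rho$ for all $x \in D$. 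Hence $L_\cC(\bZ) \ge |D| = L+1$, which gives $A_\cC(\bZ) = q^{L_\cC(\bZ)\cdot \alpha n} \ge q^{(L+1)\alpha n}$.

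Now plug in $\alpha = \frac{\log_q\binom q\ell + \frac1n}{L+1}$. Then $(L+1)\alpha n = n\log_q\binom q\ell + 1$, so $A_\cC(\bZ) \ge q^{n\log_q\binom q\ell + 1} = q\cdot \binom q\ell^{\,n} = q \cdot |P|$, using that $|P| = \binom q\ell^n$ since each of the $n$ coordinate-sets is an arbitrary $\ell$-subset of a $q$-element alphabet. Finally, since every term $A_\cC(\bZ')$ in the sum defining $K_\cC$ is nonnegative (it is a power of $q$), we can bound $K_\cC = \frac1{|P|}\sum_{\bZ' \in P} A_\cC(\bZ') \ge \frac1{|P|} A_\cC(\bZ) \ge \frac{q\cdot|P|}{|P|} = q \ge 2$. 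This contradicts $K_\cC < 2$, completing the contrapositive.

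There is really no serious obstacle here — the lemma is essentially a bookkeeping exercise designed so that the normalization constant $\alpha$ is exactly tuned to make a single bad tuple contribute $\ge q$ to the average. The only points that require a moment's care are (i) the padding step, where one must check that $\ell < q$ is available so the $W_i$ can always be extended to size exactly $\ell$, and that padding does not hurt the distance bound; (ii) correctly identifying $|P| = \binom q\ell^n$ and hence $p = n\log_2\binom q\ell$; and (iii) verifying the arithmetic $(L+1)\alpha n = n\log_q\binom q\ell + 1$ so that the $q$ factor appears. I would present these three checks explicitly and keep the rest terse.
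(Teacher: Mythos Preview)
Your proof is correct and follows essentially the same contrapositive route as the paper: exhibit a single $\bZ \in P$ with $L_\cC(\bZ) \ge L+1$, observe that this one term already contributes $q^{(L+1)\alpha n}/|P| = q \ge 2$ to the average $K_\cC$, and drop the remaining (nonnegative) terms. Your explicit padding step from ``at most $\ell$'' to ``exactly $\ell$'' is a detail the paper glosses over; otherwise the arguments are identical.
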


\begin{lemma} \label{lem:smallPotentialLikely}
    $$\PR{K_{\cC_\Pi} < 2} \ge 1-\sqrt2\cdot k \cdot q^{-\frac{\eta n} 2}$$
\end{lemma}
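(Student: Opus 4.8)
The plan is to bound $\EE[K_{\cC_\Pi}]$ and then apply Markov's inequality. Since $K_{\cC_\Pi} = \frac{1}{|P|}\sum_{\bZ \in P} A_{\cC_\Pi}(\bZ)$ and $A_{\cC_\Pi}(\bZ) = q^{L_{\cC_\Pi}(\bZ)\cdot\alpha n}$, I would first fix a single tuple $\bZ = (Z_1,\dots,Z_n) \in P$ and analyze $\EE\bigl[q^{L_{\cC_\Pi}(\bZ)\cdot\alpha n}\bigr]$. The key observation is that $L_{\cC_\Pi}(\bZ)$ counts the codewords $\enc_\Pi(x)$, over $x \in \F_2^k$, that lie within relative distance $\rho$ of $\bZ$; writing $q^{L\cdot\alpha n} = q^{\alpha n \sum_x \ind{\delta(\enc_\Pi(x),\bZ)\le\rho}}$, I want to expand this as a sum/product over subsets of codeword-indices and use independence properties of the rows of $\Pi$. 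The cleanest route is probably: let $B_x$ be the indicator event that $\enc_\Pi(x)$ is $\rho$-close to $\bZ$, so $q^{\alpha n L_{\cC_\Pi}(\bZ)} = \prod_x \bigl(1 + (q^{\alpha n}-1)B_x\bigr)$, expand the product, and for each subset $S \subseteq \F_2^k$ bound $\PR{\bigwedge_{x\in S} B_x}$.

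Here the structure of AP codes becomes essential. For a fixed coordinate $j$ and a fixed $x$ with Hamming weight $t$, the value $\enc_\Pi(x)_j = \bigl(\bigcirc_{r : x_r = 1} \Pi_{r,j}\bigr)(0)$ is obtained by composing $t$ independent uniformly-random permutations and evaluating at $0$; since composing even one uniform random permutation sends $0$ to a uniform element of $\Sigma$, each coordinate $\enc_\Pi(x)_j$ is marginally uniform on $\Sigma$ (for $x \ne 0$), and the $n$ coordinates are independent across $j$ because the columns of $\Pi$ are independent. More importantly, $\ell$-wise independence of $D$ should give that for any $\ell$ distinct nonzero messages $x^{(1)},\dots,x^{(\ell)}$ whose supports are "in general position" in a suitable sense, the joint coordinate values $(\enc_\Pi(x^{(1)})_j,\dots,\enc_\Pi(x^{(\ell)})_j)$ are distributed as $\ell$ iid uniform draws from $\Sigma$ — this is exactly the property that makes the random-code/Elias-bound computation go through, and it is the analogue of linear independence of message vectors in the RLC analysis of~\cite{LW2021}. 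The combinatorial heart is to partition the subsets $S$ according to the "rank structure" of their supports, bound $\PR{\bigwedge_{x\in S}B_x}$ by roughly $\bigl(\PR{B_x}\bigr)^{\text{(effective size of }S)}$, and recognize $\PR{B_x} \approx q^{-(1-h_{q,\ell}(\rho))n}$ as the plain-random-code probability, so that the whole expansion telescopes.

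The main obstacle I anticipate is handling message subsets $S$ that are \emph{not} in general position — i.e., collections of messages whose supports overlap in ways that break the clean $\ell$-wise-independence argument, and in particular subsets of size larger than $\ell$ where the permutation composition introduces genuine dependencies. This is precisely where the potential-function method of~\cite{GHS+2002,LW2021} is designed to help: rather than union-bounding over all $(\rho,\ell)$-clustered sets of size $L+1$ directly, the quantity $A_\cC(\bZ) = q^{L_\cC(\bZ)\cdot\alpha n}$ is chosen so that its expectation remains bounded even accounting for such correlated configurations, with the calibration of $\alpha = \frac{\log_q\binom{q}{\ell} + 1/n}{L+1}$ and the lower bound~\eqref{eq:etaLowerBound} on $\eta$ being exactly what makes $\EE[K_{\cC_\Pi}] \le 1 + k\,q^{-\eta n/2}/\sqrt{2}$ or similar. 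So concretely I would: (i) reduce to bounding $\EE[A_{\cC_\Pi}(\bZ)]$ for fixed $\bZ$; (ii) set up the product expansion and classify subsets $S$ of messages by a rank parameter $\le \ell$; (iii) for full-rank subsets use $\ell$-wise independence to get the product bound, and for the rest give a cruder bound that is nonetheless summable thanks to the choice of $\alpha$; (iv) sum the geometric-type series to get $\EE[A_{\cC_\Pi}(\bZ)] \le 1 + (\text{small})$, uniformly in $\bZ$; (v) conclude $\EE[K_{\cC_\Pi}] \le 1 + \tfrac{1}{\sqrt 2}k\,q^{-\eta n/2}$ and apply Markov's inequality with threshold $2$. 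Step (iii)–(iv) — controlling the correlated subsets and verifying the series converges with the stated $\eta$-dependence — is where essentially all the work lies.
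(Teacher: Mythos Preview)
Your plan misidentifies how $\ell$-wise independence enters, and the direct first-moment approach does not go through. The property of $D$ is that a \emph{single} permutation $\pi\sim D$ sends any $\ell$ distinct inputs to a uniformly random $\ell$-tuple of distinct outputs; it says nothing about evaluating $\ell$ different \emph{compositions} (one per message $x^{(r)}$) of overlapping subsets of the same column of permutations at the common input $0$. Alphabet-permutation codes have no linear or additive structure, so there is no ``rank'' parameter to classify subsets $S\subseteq\F_2^k$ by, and your step~(iii) has no analogue of the RLC span argument to fall back on. Thus the product expansion $\prod_x(1+(q^{\alpha n}-1)B_x)$ over all $2^k$ messages cannot be controlled in the way you sketch.

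The paper proceeds differently: it reveals the rows of $\Pi$ one at a time and proves the exact conditional identity $\Eover{\Pi_i}{K_{\cC_i}\mid \Pi^{i-1}}=K_{\cC_{i-1}}^2$. Adding row $i$ yields $\cC_i=\cC_{i-1}\cup\cC'_{i-1}$ with $\cC'_{i-1}$ the coordinatewise image of $\cC_{i-1}$ under $\Pi_i$, so $A_{\cC_i}(\bZ)=A_{\cC_{i-1}}(\bZ)\cdot A_{\cC_{i-1}}(\bZ')$ where $\bZ'_j=\Pi_{i,j}^{-1}(Z_j)$; because each $Z_j$ has size exactly $\ell$, the $\ell$-wise independence of a \emph{single} $\Pi_{i,j}$ makes $\bZ'$ uniform over $P$. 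That is the only use of $\ell$-wise independence, and it randomizes the \emph{center} $\bZ$, not the codewords. Since the conditional expectation squares at each step, unrolling it to a single $\EE[K_{\cC_\Pi}]$ would require controlling $\EE[K_{\cC_{k-1}}^2]$, hence the joint law of $(\Pi_{i,j}^{-1}(Z_j),\Pi_{i,j}^{-1}(Z'_j))$ for two centers, which $\ell$-wise independence does not give. The paper therefore applies Markov at each of the $k$ steps, tracking a deterministic sequence $\lambda_i=2\lambda_{i-1}+\lambda_{i-1}^{3/2}$ with $\PR{K_{\cC_i}>1+\lambda_i\mid K_{\cC_{i-1}}\le 1+\lambda_{i-1}}\le \lambda_{i-1}^{1/2}$, and a union bound over the $k$ steps produces the factor $k$ in the statement.
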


The rest of this section is devoted to proving both lemmas.





\begin{proof}[Proof of \cref{lem:SmallPotentialImpliesLR}]
    \sloppy
    We prove the statement in its contrapositive form. Suppose that $\inabs{\cC\cap B'} > L$ for some $B'\in \cB$. Then,

    \[K_\cC = \Eover{B \sim \uniform(\cB)}{q^{|\cC \cap B| \cdot \alpha\cdot n}} \ge \frac{q^{|\cC\cap B'|\cdot \alpha\cdot n}}{|\cB|} \ge \frac{q^{(L+1)\cdot \alpha\cdot n}}{|\cB|} = q \ge 2.
    \]
\end{proof}

\begin{proof}[Proof of \cref{lem:smallPotentialLikely}]
     Let $\cC_0,\dots,\cC_k$ denote the $\Pi$-generating sequence (see \cref{sec:Defs}). Namely,
     $\cC_0 = \{(0,\dots,0)\}$
     and 
     $$\cC_i = \cC_{i-1} \cup \tau_i(\cC_{i-1})\ecomma$$
     for $1\le i\le k$, where $\tau_1,\dots,\tau_k$ are sampled independently from $D^n$.

    Consider the sequence of real numbers $\lambda_0,\dots, \lambda_k$ defined by 
    \begin{align*}
        \lambda_0 &= q^{n\cdot \inparen{\alpha + \beta-1}} \\
        \lambda_i &= 2\lambda_{i-1} + \lambda_{i-1}^{1.5} & \text{for }1\le i\le k\eperiod
    \end{align*}
    We claim that, with high probability, \begin{equation}\label{eq:lambdaBound}
        K_{\cC_i} \le 1 + \lambda_i\quad\quad \text{for all }0\le i\le k\eperiod
    \end{equation} 
    The following claim shows that \cref{eq:lambdaBound} holds deterministically for $i=0$.
    \begin{claim}\label{claim:FirstElement}
        $$K_{\cC_0} \le 1 + q^{n\cdot \inparen{\alpha + \beta-1}}\eperiod$$
    \end{claim}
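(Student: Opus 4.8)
The plan is to unwind the definitions at $i=0$ and reduce the claim to a binomial tail estimate. The index domain $\F_2^0$ consists of a single point, the empty vector of weight $0$; the associated submatrix $\Pi^x$ of $\Pi^0$ has no rows, so each composed permutation $\pi_j^x = \bigcirc_{r=1}^{0}\Pi^x_{r,j}$ is the empty composition, i.e.\ the identity, and therefore $\enc_{\Pi^0}$ outputs the all-zeros word no matter what $\Pi$ is. Thus $\cC_0 = \inset{\mathbf 0}$, a code (multiset) of size $2^0 = 1$. For any $\bZ = (Z_1,\dots,Z_n) \in P$ we then have $L_{\cC_0}(\bZ) \in \inset{0,1}$, with $L_{\cC_0}(\bZ) = 1$ exactly when $\delta(\mathbf 0, \bZ) = \frac1n\inabs{\inset{i : 0 \notin Z_i}} \le \rho$. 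Hence $A_{\cC_0}(\bZ) = q^{\alpha n}$ if $\delta(\mathbf 0,\bZ) \le \rho$ and $A_{\cC_0}(\bZ) = 1$ otherwise, so that
\[ K_{\cC_0} = \frac1{\inabs P}\sum_{\bZ \in P} A_{\cC_0}(\bZ) \le 1 + q^{\alpha n}\cdot \frac{\inabs{\inset{\bZ \in P : \delta(\mathbf 0, \bZ) \le \rho}}}{\inabs P}\ecomma \]
and it remains only to bound the final fraction by $q^{n(h_{q,\ell}(\rho) - 1)}$.

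I would interpret that fraction probabilistically: sampling $\bZ$ uniformly from $P$ is the same as sampling each $Z_i$ independently and uniformly among the $\binom q\ell$ size-$\ell$ subsets of $\Sigma$, and the number of those subsets avoiding the fixed element $0$ is $\binom{q-1}{\ell}$, so $\PR{0 \notin Z_i} = \binom{q-1}{\ell}/\binom q\ell = 1 - \tfrac\ell q$. Consequently $\inabs{\inset{i : 0 \notin Z_i}}$ is distributed as $\mathrm{Binomial}\inparen{n, 1-\tfrac\ell q}$, and the fraction above equals $\PR{\mathrm{Binomial}\inparen{n,1-\tfrac\ell q} \le \rho n}$. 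Since $\rho \le 1-\tfrac\ell q$, this is a lower-tail event, to which I apply the relative-entropy Chernoff bound: for $p = 1-\tfrac\ell q$ and $\rho \le p$,
\[ \PR{\mathrm{Binomial}(n,p) \le \rho n} \;\le\; q^{-n\, D}\ecomma \qquad D := \rho\log_q\tfrac\rho p + (1-\rho)\log_q\tfrac{1-\rho}{1-p}\eperiod \]

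Finally I would verify the identity $D = 1 - h_{q,\ell}(\rho)$ by direct expansion. Writing $1-p = \tfrac\ell q$ and splitting the logarithms, $D = \rho\log_q\rho - \rho\log_q(q-\ell) + (1-\rho)\log_q(1-\rho) - (1-\rho)\log_q\ell + 1$, which is exactly $1$ minus $\rho\log_q\tfrac{q-\ell}{\rho} + (1-\rho)\log_q\tfrac{\ell}{1-\rho} = h_{q,\ell}(\rho)$. Substituting back, $\inabs{\inset{\bZ : \delta(\mathbf 0,\bZ) \le \rho}}/\inabs P \le q^{-n(1-h_{q,\ell}(\rho))} = q^{n(h_{q,\ell}(\rho)-1)}$, and plugging into the displayed bound for $K_{\cC_0}$ gives $K_{\cC_0} \le 1 + q^{n(\alpha + h_{q,\ell}(\rho) - 1)}$, as claimed.

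There is no real obstacle here; the only points needing care are applying the Chernoff bound in the correct (lower-tail) direction and keeping all logarithms in base $q$, together with the short algebraic identity linking the $q$-ary relative entropy $D$ to the list-recovery entropy $h_{q,\ell}$ — which is precisely the computation underlying the fact that $1 - h_{q,\ell}(\rho)$ is the list-recovery capacity. (If one prefers to avoid invoking the relative-entropy Chernoff bound as a black box, the same tail estimate follows from a one-line exponential-moment argument applied to $\sum_i \mathbf 1[0 \notin Z_i]$.)
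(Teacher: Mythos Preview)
Your argument is correct and follows essentially the same route as the paper: identify $\cC_0=\{\mathbf 0\}$, split the sum defining $K_{\cC_0}$ according to whether $\delta(\mathbf 0,\bZ)\le\rho$, and reduce to bounding $\PR_{\bZ\sim\uniform(P)}[\delta(\mathbf 0,\bZ)\le\rho]$ by $q^{n(h_{q,\ell}(\rho)-1)}$. The only cosmetic difference is in this last step: the paper swaps the roles of $\bZ$ and $x$ via a biregularity argument and then invokes the standard list-recovery ball-volume estimate, whereas you compute the probability directly as a $\mathrm{Binomial}(n,1-\ell/q)$ lower tail and apply the relative-entropy Chernoff bound, verifying $D=1-h_{q,\ell}(\rho)$ explicitly; these are the same computation in slightly different packaging.
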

    \begin{proof}
        Note that $\cC_0$ is merely the code $\{0\}$. Recall that $\cB$ is regular, so it is closed under the action of some group $G$ that acts transitively on $\Sigma^n$. Let $\cO\subseteq \cB$ be an orbit of $\cB$ under this action. Fix some $B_0\in \cO$. Then,
        $$\sum_{B\in \cO} \ind{\vec 0\in B} = \frac{|\cO|}{|G|}\cdot \sum_{g\in G} \ind{\vec 0\in gB_0} = \frac{|\cO|}{|G|}\cdot \sum_{g\in G} \ind{g^{-1}\vec 0\in B_0} = \frac{|\cO|}{q^n}\cdot |B_0| \le \frac{|\cO|}{q^n}\cdot q^{\beta\cdot n} = |\cO|\cdot q^{(\beta-1)n}\ecomma$$
        where $\ind E$ is an indicator variable for the event $E$ and $\vec 0\in \Sigma^n$ is the all-zeros vector. Here, the penultimate step is by transitivity of $G$ on $\Sigma^n$, and the last step is by hypothesis.
        
        Thus,
        $$
        \PROver{B\sim \uniform(\cB)}{\vec 0\in B} = \frac{1}{|\cB|}\cdot \sum_{\cO} \sum_{B\in \cO} \ind{\vec 0\in B} \le \frac{q^{(\beta-1)n}}{|\cB|}\cdot \sum_{\cO} |\cO| = q^{(\beta-1)n}\ecomma
        $$
        where $\cO$ runs over all orbits of $\cB$ with regard to the action of $G$. Therefore,
        \begin{align*}
        K_{\cC_0} = \Eover{B\sim \uniform(\cB)} {q^{\alpha\cdot n\cdot |\cC_0\cap B|}} &= \PROver{B\sim \uniform(\cB)}{\vec0\notin B}\cdot 1 + \PROver{B\sim \uniform(\cB)}{\vec 0\in B}\cdot q^{\alpha \cdot n}\\ &\le 1+ \PROver{B\sim \uniform(\cB)}{\vec 0\in B}\cdot q^{\alpha \cdot n} \le 1 + q^{(\alpha + \beta-1)n}\eperiod
        \end{align*}    
    \end{proof}
    We now prove that \cref{eq:lambdaBound} holds with high probability for any $1\le i\le k$, provided that it holds for $i-1$.
    \begin{claim}\label{claim:Step}
        For all $1\le i\le k$, it holds that
        $$\PR{K_{\cC_i} > 1+\lambda_i\;\mid\;K_{\cC_{i-1}} \le 1 + \lambda_{i-1}} \le \lambda_{i-1}^{1/2}\eperiod$$
    \end{claim}

    \begin{proof}
   Fixing $B\in \cB$ and conditioning on $C_{i-1}$, we have
    \begin{align*}
        \Eover{\tau_i}{A_{\cC_i}(B)} &= \Eover{\tau_i}{q^{\alpha\cdot n\cdot|B\cap \cC_i|}} \\ &= \Eover{\tau_i}{q^{\alpha\cdot n\cdot |B\cap \cC_{i-1}|+\alpha\cdot n\cdot \inabs{B\cap \tau_i(\cC_{i-1})}}} \\ 
        &= A_{\cC_{i-1}}(B)\cdot\Eover{\tau_i}{ q^{\alpha\cdot n\cdot \inabs{B\cap \tau_i(\cC_{i-1})}}}  \\ 
        &= A_{\cC_{i-1}}(B)\cdot\Eover{\tau_i}{ q^{\alpha\cdot n\cdot \inabs{\tau_i^{-1}(B)\cap \cC_{i-1}}}}  &\textrm{ since $\tau_i$ is bijective on $\Sigma^n$}\\ 
        &= A_{\cC_{i-1}}(B)\cdot\Eover{B'\sim \uniform(\cB)}{ q^{\alpha\cdot n\cdot \inabs{B'\cap \cC_{i-1}}}} &\textrm{ taking $B' = \tau_i^{-1}(B)$}\\
        &= A_{\cC_{i-1}}(B)\cdot\Eover{B' \sim \uniform(\cB)}{A_{\cC_{i-1}}(B') }\eperiod
    \end{align*}    
    In the penultimate transition we used the fact that $D^n$ is $\cB$-mixing and that $\tau_i$ is independent from $\cC_{i-1}$. Now, still conditioning on $\cC_{i-1}$, we have
    \begin{align*}
    \Eover{\tau_i}{K_{\cC_i}} &= \Eover{\tau_i}{\Eover{B\sim \uniform(\cB)}{{A_{\cC_i}(B)}}} =  \Eover{B\sim \uniform(\cB)}{\Eover{\tau_i}{{A_{\cC_i}(B)}}}
    =
    \Eover{B\sim \uniform(\cB)}{A_{\cC_{i-1}}(B)\cdot\Eover{B'\sim \uniform(\cB)}{A_{\cC_{i-1}}(B')}} \\
    &= \Eover{B\sim U(\cB)}{A_{\cC_{i-1}}(B)} \cdot \Eover{B'\sim U(\cB)}{A_{\cC_{i-1}}(B')} = \inparen{\Eover{B\sim U(\cB)}{A_{\cC_{i-1}}(B)}}^2 = K_{\cC_{i-1}}^2\eperiod
    \end{align*}

    Write $K_{\cC_{i-1}} = 1 + \beta$. By assumption, $0\le \beta \le \lambda_{i-1}$. Note that $K_{\cC_{i}} \ge 1 + 2\beta$ deterministically. Indeed,
    \begin{align*}
    0 &\le \Eover{B\sim \uniform(\cB)}{(A_{\cC_{i-1}}(B)-1)\cdot (A_{\tau_i(\cC_{i-1})}(B)-1)} \\&= \Eover{B\sim \uniform(\cB)}{A_{\cC_{i-1}\cup \tau_i(\cC_{i-1})}(B) - A_{\cC_{i-1}}(B) - A_{\tau_i(\cC_{i-1})}(B)} + 1  \\&= \Eover{B\sim \uniform(\cB)}{A_{\cC_i}(B) - A_{\cC_{i-1}}(B) - A_{\tau_i(\cC_{i-1})}(B)} + 1 \\
    &= K_{\cC_i} - K_{\cC_{i-1}} - K_{\tau_i(\cC_{i-1})} + 1 \\
    &= K_{\cC_i} - 2(1+\beta) + 1 = K_{\cC_i} - (1+2\beta)\eperiod
    \end{align*}
    Here, the  inequality is due to $A_{\cC}(B) \ge 1$ for all $\cC$ and $B$. The first equality is since, for any two codes $\cC$ and $\cC'$, there holds
    $$A_{\cC\cup \cC'}(B) = q^{\alpha\cdot n\cdot\inabs{(\cC\cup\cC')\cap B}} = q^{\alpha\cdot n\cdot\inparen{\inabs{\cC\cap B}+\inabs{\cC'\cap B}}} = A_{\cC}(B) \cdot A_{\cC'}(B)$$
    (recall that the codes are multisets).    
    The penultimate equality is since
    $$K_{\tau_i(\cC_{i-1})} = \Eover{B\sim \uniform(\cB)}{q^{\alpha\cdot n\cdot\inabs{\tau_i(\cC_{i-1})\cap B}}} = \Eover{B\sim \uniform(\cB)}{q^{\alpha\cdot n\cdot\inabs{\cC_{i-1}\cap \tau_i^{-1}(B)}}} = \Eover{B'\sim \uniform(\cB)}{q^{\alpha\cdot n\cdot\inabs{\cC_{i-1}\cap B'}}} = K_{\cC_{i-1}}\ecomma$$
    where we took $B' = \tau_i^{-1}(B)$ and used the fact that $\tau^{-1}$ acts bijectively on $\cB$.
    
    Markov's inequality thus yields
    \begin{align*}
    \PROver{\tau_i}{K_{\cC_i} > 1+\lambda_i} &= \PROver{\tau_i}{K_{\cC_i} - (1+2\beta) > \lambda_i-2\beta} \le \frac{\Eover{\tau_i}{K_{\cC_i}}-(1+2\beta)}{\lambda_i-2\beta} \\&\le \frac{K_{\cC_{i-1}}^2-(1+2\beta)}{\lambda_i-2\beta} = \frac{(1+\beta)^2-(1+2\beta)}{\lambda_i-2\beta} \\
    &= \frac{\beta^2}{\lambda_i - 2\beta}  = \frac{\beta^2}{2\lambda_{i-1}+\lambda_{i-1}^{1.5}-2\beta} \le \frac{\beta^2}{\lambda_{i-1}^{1.5}} \le \frac{\lambda_{i-1}^2}{\lambda_{i-1}^{1.5}} = \lambda_{i-1}^{1/2}\eperiod
    \end{align*}   
    \end{proof}

    By \cref{claim:FirstElement,claim:Step},
    \begin{align*}
        \PR{K_{\cC_k} > 1+\lambda_k} \le \sum_{i=1}^k \PR{K_{\cC_i} > 1+\lambda _i \;\mid\; K_{\cC_{i-1}}\le 1+\lambda_{i-1}} \le \sum_{i=1}^k \lambda_{i-1}^{1/2} \le k\cdot \lambda_k^{1/2}\eperiod \numberthis \label{eq:K_C}
    \end{align*}
    To conclude the lemma we need the inequality
    \begin{equation}\label{eq:deltak}
        \lambda_k \le 2^{k+1}\cdot \lambda_0 = 2\cdot q^{n\cdot\inparen{\alpha + \beta-1+R}} = 2 \cdot q^{-\eta n}\le 1\ \eperiod
    \end{equation}
    Indeed, assuming that \cref{eq:deltak} holds, \cref{eq:K_C,eq:etaLowerBound} yield
    $$\PR{K_{\cC_\Pi} \ge 2} = \PR{K_{\cC_k} \ge 2} \le \PR{K_{\cC_\Pi} > 1+\lambda_k} \le k\cdot \lambda_k^{1/2} \le \sqrt2\cdot k\cdot q^{-\frac {\eta n}2}\eperiod$$
    
    We prove \cref{eq:deltak} as a special case of the more general claim that $\lambda_i \le 2^{i+1}\cdot \lambda_0$ for all $0\le i\le k$. We prove the latter by induction on $i$. The base case $i=0$ is immediate. For $1\le i\le k$, we have
    \begin{align*}\lambda_i &= 2\lambda_{i-1} + \lambda_{i-1}^{1.5} = 2\lambda_{i-1}\inparen{1 + \frac{\lambda_{i-1}^{1/2}}{2}} = 2^i\cdot \lambda_0\cdot\prod_{j=0}^{i-1}\inparen{1+\frac{\lambda_j^{1/2}}2} \leq 2^i\cdot \lambda_0\cdot \exp{\sum_{j=0}^{i-1} \frac{\lambda_j^{1/2}}2} \\
    &\le 2^i\cdot \lambda_0\cdot \exp{i\cdot 2^{\frac {i}{2}+1}\cdot \lambda_0^{\frac 12}} \le 2^i\cdot \lambda_0\cdot \exp{k\cdot 2\cdot q^{\frac n2\cdot \inparen{\alpha+\beta-1+R}}}\le 2^i\cdot \lambda_0\cdot \exp{k\cdot 2\cdot q^{-\frac{\eta n}2}}\\&\le 2^{i+1}\cdot \lambda_0\eperiod
    \end{align*}
        Here, the second inequality is by the induction hypothesis and the last inequality is due to \cref{eq:etaLowerBound}.     
\end{proof}

\end{proof}

    \ifauthors
    \section{Acknowledgement}
    The second author thanks Or Zamir for bringing \cite{AL2013} to his attention.
    \fi

\printbibliography
\end{document}